\newcommand{\R}{\mathbb{R}}
\newcommand{\E}{\mathbf{E}}
\newcommand{\I}{\bm{I}}
\newcommand{\1}{\mathbf{1}}
\renewcommand{\vec}[1]{{\bm{#1} }}
\newcommand{\ddtheta}{\,\mathrm{d}\vec{\theta}}
\newcommand{\dtheta}{\,\mathrm{d}\vec{\theta}}
\newcommand{\dpp}{\,\mathrm{d}\vec{p}}
\newcommand{\dt}{\,\mathrm{d}t}
\newcommand{\dW}{\,\mathrm{d}\vec{W}}
\newcommand{\LL}{{\mathcal L}}
\newcommand{\LA}{{\mathcal A}}
\newcommand{\NOGIN}{{\emph{NOGIN}} }
\newcommand{\cov}{\text{Cov}}
\newcommand{\N}{\text{N}}
\newcommand{\grad}{\vec{\nabla}}
\newcommand{\T}{{\mathsmaller{\textrm T}}}
\begin{document}

\title{Langevin Markov Chain Monte Carlo with stochastic gradients}

\author{\name Charles Matthews \email c.matthews@ed.ac.uk \\
       \addr Department of Mathematics, The University of Edinburgh
       \AND
       \name Jonathan Q. Weare \email weare@nyu.edu \\
       \addr Courant Institute of Mathematical Sciences, New York University}

\editor{TBA}

\maketitle

\begin{abstract}%
Monte Carlo sampling techniques have broad applications in machine learning, Bayesian posterior inference, and parameter estimation. Often the target distribution takes the form of a product distribution over a  dataset with a large number of entries.  For sampling schemes utilizing gradient information it is cheaper for the derivative to be approximated using a random small subset of the data, introducing extra noise into the system. We present a new discretization scheme for underdamped Langevin dynamics when utilizing a stochastic (noisy) gradient. This scheme is shown to bias computed averages to second order in the stepsize while giving exact results in the special case of sampling a Gaussian distribution with a normally distributed stochastic gradient.
\end{abstract}
\begin{keywords}
Stochastic sampling, Markov chain Monte Carlo, Computational statistics, Machine learning, Langevin dynamics, Noisy gradients
\end{keywords}

\section{Introduction} \label{sec::intro}
A commonly encountered problem in data science and machine learning applications is the sampling of parameters $\vec{\theta} \in \R^D$ from a product distribution over a dataset $\vec{y}$ containing $N$ observations $\vec{y}_i$. Combined with a regularizing prior distribution $\pi_0(\vec{\theta})$, the aim is to generate points distributed according to the target distribution $\pi(\vec\theta)$ where 
\begin{equation}\label{eqn::target_distribution}
\pi(\vec{\theta}) \propto \pi_0(\vec{\theta})\prod_{i=1}^N \pi(\vec{y}_i\,|\,\vec\theta).
\end{equation}
Such a formulation is commonplace in  Bayesian inverse applications, where $\pi(\vec{\theta})$ is referred to as a posterior distribution.  Markov Chain Monte Carlo (MCMC) schemes are an   effective method for sampling from such distributions \citep{robert2004monte,brooks2011handbook}, however conventional schemes (see e.g. \citep{metropolis,neal2011mcmc}) require an evaluation of $\grad\log\pi(\vec\theta)$ or $\pi(\vec\theta)$ (up to a normalization constant) at each step to propose new points. Because each evaluation requires a full pass through the dataset this can be prohibitively expensive for  large $N$, and therefore there is considerable interest in MCMC schemes that   sample the target distribution but do not require   access to the entirety of $\vec y$.

In this article we are interested in sampling $\pi$ using stochastic approximations to the  gradient. For a distribution as in \eqref{eqn::target_distribution},  the gradient vector (or the {\emph{force}})  is a sum over the data
\begin{equation} \label{eqn::true_force}
\vec F(\vec \theta) := \grad\log\pi_0(\vec\theta) + \sum_{i=1}^N \grad\log\pi(\vec{y}_i\,|\,\vec\theta)
\end{equation}
where $\grad$ takes a gradient with respect to $\vec\theta$. Gradient information is used to drive proposals for an MCMC scheme in efficient directions in the $D$ dimensional space. We will refer to $\vec F$ as the exact or `true' force vector, using all $N$ datapoints to compute the derivative. If $N$ is large it may become necessary to instead use a computationally cheaper approximation to the true gradient. In this article we will consider a \emph{noisy gradient} estimator $\widetilde{\vec F}(\vec \theta)$ with the properties 
\begin{equation}\label{eqn::noisyf}
\E\left[\widetilde{\vec F}(\vec \theta)\right] = \vec{F}(\vec\theta),\qquad \cov\left[\widetilde{\vec F}(\vec \theta)\right] = \vec{\Sigma}(\vec\theta) 
\end{equation}
for positive semi-definite $\vec{\Sigma}$. There are many choices of  estimator in the literature, with much recent work undertaken to increase  accuracy (quantified by the size of $\vec{\Sigma}$) and reduce storage requirements (see e.g.  \citep{baker2017control,dubey2016variance}).  Usually the primary cost of the estimator comes from an evaluation of force terms over a random mini-batch $\widetilde{\vec{y}}\subseteq \vec{y}$ where $|\widetilde{\vec{y}}|=n$ for some fixed $n\leq N$. The mini-batch is redrawn from $\vec{y}$ at every estimation of $\widetilde{\vec{F}}$, with $\vec{\Sigma}\to\vec{0}$ as $n\to N$ if the redrawing is without replacement. The simplest estimator choice is using  a scaled sum over a mini-batch uniformly subsampled from $\vec{y}$ \citep{robbins1951stochastic}:
\begin{equation} \label{eqn::robmun}
\widetilde{\vec F}(\vec \theta) := \grad \log\pi_0(\vec \theta)+ \frac{N}{n} \sum_{i=1}^n \grad\log\pi(\tilde{\vec{y}}_i\,|\,\vec\theta), \quad 
\vec{\Sigma}(\vec \theta) := \frac{N(N-n)}{n}\cov\left[\{\vec{F}_i(\vec\theta)\}\right]
\end{equation}
where $\vec{F}_i(\theta) := \grad\log\pi(\vec{y}_i\,|\,\vec\theta)$. An estimate for $\vec{\Sigma}$  can be obtained in practice by computing the covariance of $n$-many $\vec{F}_i$ terms taken over a mini-batch. Such approximations are central to many   algorithms' attempts to reduce the observed error, as the additional stochastic  term introduced into the MCMC proposals can lead to a large bias if not accounted for. The size of this bias  and the rate of decorrelation of samples largely differentiates noisy gradient methods, and is the subject of the present article.

\begin{algorithm}[tb]
    \caption{\NOGIN: Noisy Gradient Integrator}
    \label{alg::nogin}
\begin{algorithmic}[1]
	\REQUIRE $\vec{\theta}_0,\,h>0,\,\gamma>0,\,T>0$
	\STATE \textbf{Initialize}: $\,\,$ $\vec p \sim \N(\vec 0,\I)$;  $\,\vec\theta \leftarrow \vec\theta_0$; $\,\lambda \leftarrow \sqrt{\tanh(\gamma h /2) }$
    \FOR{$t=1$ to $T$}
	\STATE $\vec\theta\leftarrow \vec\theta  + \tfrac{h}2 \vec p $
	\STATE $\widetilde{\vec F} \leftarrow \widetilde{\vec F}(\vec \theta )$, $\,\,\vec\Sigma \leftarrow {\cov}(\widetilde{\vec F}(\vec \theta ))$, $\,\,\vec R  \sim \N(\vec 0, \I)$
	\STATE $\vec p \leftarrow  \vec p + \tfrac{h}2 \widetilde{\vec F} +  {\lambda}\vec R $
	\STATE $\vec p \leftarrow  \left((1-\lambda^2)\I-\frac{h^2}4\vec \Sigma \right) \left((1+\lambda^2)\I+\frac{h^2}4\vec \Sigma \right)^{-1} \vec p$  \label{step::damping}
	\STATE $\vec p \leftarrow  \vec p + \tfrac{h}2 \widetilde{\vec F} + {\lambda}\vec R $
	\STATE $\vec \theta \leftarrow  \vec \theta + \tfrac{h}2\vec p$
	\STATE $\vec \theta_t \leftarrow  \vec \theta$
	\ENDFOR  
\end{algorithmic}
\end{algorithm}
In the sequel we assume that we are given some estimator for the force $\widetilde{\vec{F}}$ and are able to compute an estimate for its covariance $\vec{\Sigma}$. We present a new sampling scheme that we refer to as the Noisy Gradient Integrator (or \NOGIN$\!\!$), given in Algorithm \ref{alg::nogin}. The structure of the article is as follows. In Section \ref{sec::noisygrad} we set notation for the noisy schemes and place our method in a proper context. Section \ref{sec::nogin} derives our proposed \NOGIN scheme from established Langevin dynamics methods. Section \ref{sec::analysis} gives some analytical results for the expected error from the \NOGIN scheme. In Section \ref{sec::experiments} we compare the proposed method against others on a Bayesian inference application. A discussion of results, outlook and ramifications concludes the article in Section \ref{sec::conclusion}.

\section{Noisy Gradient Integration}\label{sec::noisygrad}
An effective way to utilize gradient information in MCMC schemes is by proposing points from solution trajectories of ergodic dynamics that sample the target distribution $\pi$ \citep{brooks2011handbook}. For appropriate test functions $f$, ergodicity implies that for such a solution trajectory $\vec{\theta}(t)$
\[
\lim_{T\to\infty} \frac{1}{T}\sum_{t=1}^T f(\vec{\theta}(t)) = \int f(\vec{\theta}) \pi(\vec{\theta}) \ddtheta =: \langle f \rangle.
\]
However exact solutions for dynamics involving gradients of complicated, nonlinear $\pi$ are seldom known. A discretization scheme is usually employed to advance the state through time by computing a sequence of $\vec{\theta}_k\approx\vec{\theta}(kh)$ for a discretization timestep $h>0$.  
The discretization introduces statistical error into the computed trajectory at both finite and infinite time (see e.g. \citep{leimkuhler2015computation}). The infinite-time (or sometimes called asymptotic or perfect) sampling bias introduced by the schemes is the difference in the respective averages of $f$, 
\[
\text{Asymptotic Bias } = \left|\langle f \rangle_h - \langle f \rangle\right|,\qquad \langle f \rangle_h := \lim_{T\to\infty} \frac{1}{T}\sum_{t=1}^T f(\vec{\theta}_t),
\]
where $\langle f \rangle_h$ is the infinite-time observed average evaluated from numerical computation. 
The size of this bias can be tuned by decreasing the stepsize $h$, at the cost of sacrificing the rate of exploration for the scheme (see Section \ref{sec::zerosum}).  Alternatively schemes such as MALA \citep{roberts1996} employ an additional Metropolis-Hastings (MH) step to correct for the bias introduced from the discretization, at the cost of rejecting some moves.   

In the case of a target distribution in the form of \eqref{eqn::target_distribution}, the $O(N)$ cost of the force $\vec{F}$ or log-likelihood $\log\pi(\vec{\theta}\,|\,\vec{y})$  may be prohibitively expensive in terms of memory or computation. 
Na\"ively exchanging $\vec{F}$ for a stochastic estimate $\widetilde{\vec F}$ in an MCMC algorithm usually 
leads to the introduction of a large  bias thanks to the additional unaccounted noise term. 

We consider schemes using a fixed timestep parameter $h$, differing from decreasing-stepsize schemes such as SGLD   \citep{welling2011bayesian}, which reduce the bias at the cost of decreased computational efficiency as $T$ gets large \citep{nagapetyan2017true}. Many similar strategies involve either decreasing $h$ only  up to a given value (e.g. \citep{chen2014stochastic}) or running with a fixed stepsize and accepting the error introduced in exchange for faster convergence (see e.g. \citep{vollmer2016exploration,nagapetyan2017true}). 
The SGNHT scheme \citep{ding2014bayesian} is able to correct for the gradient noise without specific evaluation of $\vec{\Sigma}$ by extending the space to include one or more seperate thermostatting variables, as long as the covariance of the stochastic gradient $\vec\Sigma$ is independent of $\vec{\theta}$.  The  CCADL scheme \citep{shang2015covariance} builds upon the thermostatting framework of the SGNHT scheme  and proposes a first-order scheme that aims to reduce the bias in systems with stochastic gradients with non-constant covariance.

A modified version of the SGLD scheme (mSGLD) using fixed stepsize was introduced in \citep{vollmer2016exploration} and varies the strength of the introduced white noise term to balance against the nuisance noise from the stochastic gradient estimate.  Similarly SGHMC in \citep{chen2014stochastic} scales the variance of the Langevin noise term to offset the extra introduced noise, but uses a formulation that includes momentum.
While  alternatives to stochastic gradient methods exist for reducing the computational cost of the gradient over the data (see e.g. \citep{chen2015convergence,korattikara2014austerity,maclaurin2014firefly}) we shall keep our focus on a noisy gradient formulation in what follows.

Our aim in this article is to derive a new numerical scheme that introduces a bias of order $h^2$ when using a fixed stepsize  with a noisy gradient. Our strategy is similar to the aforementioned articles: we treat the stochasticity of the noisy gradient as an additional random noise term that acts to `heat' the resulting dynamics. However we build a discretization scheme in a way that allows us to 
incorporate the extra random term into an exact solve of an Ornstein-Uhlenbeck process already present in Langevin dynamics. Assuming we  have  some knowledge of the estimator being used, specifically the covariance of the noise $\vec{\Sigma}(\vec{\theta})$, we are able to appropriately damp the dynamics. We show numerically that this scheme has favorable properties even when the noise's covariance can only be approximated. We shall not consider correcting for the effects from moments higher than the covariance,  as   higher moments   introduce bias at   orders of $h$ beyond the usual order of our integrators, and hence will be `invisible' in error analysis. 

\section{The NOGIN Scheme}\label{sec::nogin}
\subsection{Langevin dynamics}
%
We will demonstrate that the \NOGIN scheme given in Algorithm \ref{alg::nogin} discretizes the underdamped Langevin dynamics SDE
\begin{equation}\label{eqn::langevindyn}
\dtheta = \vec p\dt,\qquad 
\dpp = \vec{F}(\vec \theta)\dt -\vec\mu(\vec\theta) \vec p \dt + \sqrt{2\vec\mu(\vec \theta)} \dW
\end{equation}
to second order, where $\vec\mu(\vec\theta)$ is a particular symmetric positive definite matrix  and $\vec{W}(t)$ is a standard Wiener process. 
This dynamics  uniquely preserve  the augmented distribution
\begin{equation}
\pi(\vec\theta,\vec p) := \pi(\vec\theta ) \N(\vec p\,|\,\vec{0},\I),\label{eqn::target}
\end{equation}
whose marginal in $\vec \theta$ yields the required target distribution.  We consider integration schemes built by additively decomposing the vector field of \eqref{eqn::langevindyn}  into three pieces, following the procedure in  \citep{leimkuhler2012rational}. The pieces, labeled A, B, and O, are defined as
\begin{align}\label{eqn::ldpieces}
{\textrm{d}} \left[\begin{array}{c}\vec \theta\\\vec p\end{array}\right] = 
\underbrace{\left[\begin{array}{c}\vec p\dt\\\vec 0\end{array}\right]}_{\textrm A}
+ 
\underbrace{\left[\begin{array}{c}\vec0\\ \vec{F}(\vec \theta)\dt\end{array}\right]}_{\textrm B} 
+
\underbrace{\left[\begin{array}{c}\vec0\\ -\vec\mu(\vec\theta) \vec p \dt + \sqrt{2\vec\mu(\vec\theta)} \dW\end{array}\right]}_{\textrm O}.
\end{align}
The A and B parts are referred to as the `drift' and `kick' pieces respectively, while the O `fluctuation' piece corresponds to an Ornstein-Uhlenbeck (OU) linear SDE process. Note that each of the pieces, when taken individually, has an explicit weak solution. If we define propagation functions
\begin{gather}
\Phi^{\text A}_{h}(\vec\theta,\vec p) := (\vec\theta + h \vec p , \vec p ), \qquad
\Phi^{\text B}_{h}(\vec \theta,\vec p) := (\vec\theta , \vec p + h \vec{F}(\vec\theta) ),\label{eqn::stepB}\\
\Phi^{\text O}_{\vec M,\vec R}(\vec\theta,\vec p) := \left(\vec\theta , \vec{M} \vec p + \sqrt{\I-\vec{M}^2 }\vec R\right)\label{eqn::stepO}
\end{gather}
for symmetric $\vec{M}\in\R^{D\times D}$ and vector $\vec{R}\in\R^D$, then for suitable test function $f$, denoting the backward Kolmogorov operator corresponding to piece X in \eqref{eqn::ldpieces} as $\LL_X$,
\begin{gather}
\left(e^{h \LL_{\text A}} f\right)(\vec{\theta},\vec{p}) = \E\left[ f\left(\Phi^{\text A}_{h}(\vec\theta,\vec p)\right)  \right],\qquad
\left(e^{h \LL_{\text B}} f\right)(\vec{\theta},\vec{p}) = \E\left[ f\left(\Phi^{\text B}_{h}(\vec\theta,\vec p)\right)  \right],\label{eqn::map1}\\
\left(e^{h \LL_{\text O}} f\right)(\vec{\theta},\vec{p}) = \E\left[ f\left(\Phi^{\text O}_{\vec{M}_h,\vec R}(\vec\theta,\vec p)\right)  \right] \quad {\text{if}} \,\,\vec R \sim  \N(\vec0,\I) \,\,{\text{and}} \,\,\vec{M}_h=\exp(-\vec\mu(\vec\theta)h).\label{eqn::map2}
\end{gather}
Numerical schemes for \eqref{eqn::langevindyn} can then be proposed by composing these three mappings in a prescribed sequence, using the A-B-O alphabet. For example, a step of the ABOBA scheme in \citep{leimkuhler2012rational} with stepsize $h>0$  and using $\vec\mu(\vec\theta)=\gamma\I$ (for friction constant $\gamma>0$)  can be written as
\[
(\vec\theta_{t+1},\vec p_{t+1}) = \Phi^{\text A}_{h/2}  \circ\Phi^{\text B}_{h/2}  \circ \Phi^{\text O}_{e^{-\gamma h\I},\vec R_t} \circ \Phi^{\text B}_{h/2}  \circ \Phi^{\text A}_{h/2}(\vec \theta_t,\vec p_t).
\]
This labeling convention provides a family of \emph{splitting} methods that integrate the dynamics \eqref{eqn::langevindyn} robustly, with a particular method encoded by its string of characters. Given some mild assumptions on $\pi$ made explicit in \citep{leimkuhler2015computation}, we can utilize the Baker-Campbell-Hausdorff theorem  \citep{hairer2013geometric,leimkuhler2015computation}, to show that a method encoded by a symmetric string of these three pieces will give at least an $O(h^2)$ bias in observed averages,.



\subsection{Adding gradient noise}

Some consideration is needed when using a noisy gradient in Langevin dynamics. Swapping $\vec{F}$ for $\widetilde{\vec{F}}$ in \eqref{eqn::stepB} replaces the B `kick' step  
by a noisy update $\widetilde{\text B}$ where the force term is correct only in expectation. Our strategy is to use this noise term in place of some of the noise usually appearing in the O step and recover consistent sampling by choosing the damping matrix correctly in the O update.  To ensure that the gradient noise's covariance has sufficient rank, we inject an additional independent random term $\vec{R}\sim\N(\vec0,\lambda_h^2 \I)$ into the $\widetilde{\text B}$ step, where $\lambda_h$ is a positive constant chosen as a function of $h$. The $\widetilde{\text B}$ update function is therefore the sum of the stochastic gradient $\widetilde{\vec{F}}(\vec\theta)$ and the injected noise 
\[
\Phi^{\widetilde{\text B}}_{h,\widetilde{\vec Z}_h(\vec\theta)}(\vec\theta,\vec p)  = 
(\vec\theta , \vec p + h \widetilde{\vec{F}}(\vec\theta) + {\lambda_h} {\vec{R}} )= 
(\vec\theta , \vec p + h \vec{F}(\vec\theta) +   \widetilde{\vec Z}_h(\vec\theta) ) 
\]
where $\widetilde{\vec Z}_h(\vec\theta):=\lambda_h \vec{R} + h(\widetilde{\vec{F}}(\vec\theta)-{\vec{F}}(\vec\theta))$ includes all of the random terms, with  
\[\E[\widetilde{\vec Z}_h(\vec\theta)]= \vec0,  \quad {{\cov}[\widetilde{\vec Z}_h(\vec\theta)] = \vec{\Sigma}_{h}(\vec{\theta})=\lambda_h^2\I+h^2\vec{\Sigma}(\vec{\theta})}.\] 
It is important to note that although we may encode methods using the same splitting convention as in \eqref{eqn::ldpieces}, e.g. the A$\widetilde{\text B}$O$\widetilde{\text B}$A method, we no longer automatically recover second-order sampling (as we would normally for a palindromic string) due to the $\widetilde{\text B}$ term not correctly integrating the associated B piece defined in \eqref{eqn::ldpieces}.  

The $\widetilde{\text B}$ update we have defined is atomic (i.e. we cannot decouple the true force term from the noise term $\widetilde{\vec Z}_h(\vec\theta)$) so we are not able to directly manipulate the nuisance stochastic term in order to remove introduced bias.  
However, we may relate the noisy `$\widetilde{\text B}$O$\widetilde{\text B}$' update to the original `BOB' update by  comparing the resulting compositions. As we have injected noise already in the $\widetilde{\text B}$ step, we consider an O update using a damping matrix $\vec{\Gamma}_h$ and zero noise term.  Writing out the composed mappings, we have
\begin{align*}
\Phi^{\widetilde{\text B}}_{h/2,\widetilde{\vec Z}_{h/2}}\circ \Phi^{\text O}_{{\vec\Gamma}_h,\vec0} \circ\Phi^{\widetilde{\text B}}_{h/2,\widetilde{\vec Z}_{h/2}}(\vec\theta,\vec p) 
&= \left(\vec{\theta},{\vec\Gamma}_h\vec{p}+(\I+{\vec\Gamma}_h) (\tfrac{h}{2}\vec{F}(\vec\theta) +  \widetilde{\vec Z}_{h/2}(\vec\theta))  \right),\\
\Phi^{{\text B}}_{h/2}\circ \Phi^{\text O}_{{\vec \Gamma}_h,\vec R} \circ\Phi^{{\text B}}_{h/2}(\vec\theta,\vec p) 
&= \left(\vec{\theta},{\vec\Gamma}_h\vec{p}+\tfrac{h}2(\I+{\vec\Gamma}_h) \vec{F}(\vec\theta) +  \sqrt{\I - \vec{\Gamma}_h^2} \vec{R}  \right).
\end{align*}
We can reconcile these via the choice
\begin{align} 
{\vec\Gamma}_h \!
&= \! \left(\I-\vec \Sigma_{h/2}(\vec\theta) \right)\left(\I+ \vec \Sigma_{h/2}(\vec \theta) \right)^{-1} \!\nonumber \\
&= \! \left(\!\I-\lambda_{h/2}^2\I-\frac{h^2}4\vec \Sigma(\vec\theta) \! \right)\left(\!\I+\lambda_{h/2}^2\I+\frac{h^2}4\vec \Sigma(\vec \theta) \!\right)^{-1}\!\!\label{eqn::damping}
\end{align}
for which 
\[
\left(\I+{\vec\Gamma}_h\right) \left(\tfrac{h}{2}\vec{F}(\vec\theta) +  \widetilde{\vec Z}_{h/2}(\vec\theta)\right) = \tfrac{h}2\left(\I+{\vec\Gamma}_h\right) \vec{F}(\vec\theta) + \sqrt{\I-{\vec\Gamma}_h^2} \left(\sqrt{\vec{\Sigma}_{h/2}(\vec{\theta})}\right)^{-1}  \widetilde{\vec Z}_{h/2}(\vec\theta)
\]
and thus 
\begin{equation}
\Phi^{\widetilde{\text B}}_{h/2,\widetilde{\vec Z}_{h/2}}\circ \Phi^{\text O}_{{\vec\Gamma}_h,\vec0} \circ\Phi^{\widetilde{\text B}}_{h/2,\widetilde{\vec Z}_{h/2}}(\vec\theta,\vec p) = \Phi^{{\text B}}_{h/2}\circ \Phi^{\text O}_{{\vec\Gamma}_h,\widetilde{\vec{Y}}(\vec{\theta)}} \circ\Phi^{{\text B}}_{h/2}(\vec\theta,\vec p), \label{eqn::bobrel}
\end{equation}
where 
\[
\widetilde{\vec{Y}}(\theta):= \left(\sqrt{\vec{\Sigma}_{h/2}(\vec{\theta})}\right)^{-1}  \widetilde{\vec Z}_{h/2}(\vec{\theta})
\]
is a random vector with $\E[\widetilde{\vec{Y}}]=\vec0$ and $\cov(\widetilde{\vec{Y}})=\I$. This shows that for the particular choice of damping matrix \eqref{eqn::damping}, the `$\widetilde{\text B}$O$\widetilde{\text B}$' update is the same as the usual Langevin dynamics `BOB' update with random vector $\widetilde{\vec{Y}}$. Thus we propose an integrator coded A$\widetilde{\text B}$O$\widetilde{\text B}$A,
which we refer to  as the \NOGIN (NOisy Gradient INtegrator) scheme given explicitly in Algorithm \ref{alg::nogin}.  As this integrator contains a `$\widetilde{\text B}$O$\widetilde{\text B}$' string, we can use relation \eqref{eqn::bobrel} to relate its mapping to the second-order ABOBA scheme given in  \citep{leimkuhler2012rational}.   The noise term in the scheme, though not normally distributed in general,  will have zero mean and unit covariance which we demonstrate in Section \ref{sec::analysis} is enough to give a second order discretization of \eqref{eqn::langevindyn} for a specific choice of $\vec\mu(\vec\theta)$. 

\subsection{Efficient evaluation of the covariance} 

In principle the damping update we use (step \ref{step::damping} in Algorithm \ref{alg::nogin})  requires an evaluation of the exact $\vec\Sigma(\vec\theta)$ and the computation of a matrix inverse-vector multiply. As $N$ and $D$ get large this can seriously impact the computational efficiency of the method. This is not just a practical concern for \NOGIN, but all methods that utilize the stochastic gradient's covariance information share a similar challenge.

One solution to this problem is to utilize a low rank approximation to the covariance in place of the exact evaluation. For example, in the case of the estimator \eqref{eqn::robmun} we can write $\vec\Sigma$ as an outer product 
\[ 
\vec{\Sigma}(\vec{\theta}) = \frac{N-n}n\vec\Lambda(\vec{\theta}) \vec\Lambda^\T(\vec\theta), \qquad \vec\Lambda_{[:,\,i]}(\vec\theta) = \vec{F}_i(\vec{\theta}) - \vec{F}(\vec{\theta}),\qquad \vec\Lambda \in \R^{D\times N}
\]
where $\vec\Lambda_{[:,\,i]}$ is the $i^\text{th}$ column of matrix $\vec\Lambda$. Explicit evaluation of this term is computationally prohibitive as it requires computing all $N$ force terms $F_i$. Instead we may approximate the covariance by taking an outer product over the $n$ mini-batch force terms we compute at each step. If necessary we may subsample the estimate further to ensure that any matrix-vector calculations involving the estimate of $\Sigma$ can be done cheaply. For the inverse matrix-vector product, we may either use a conjugate gradient solver to approximate this rapidly, or use the Sherman-Morrison formula to find it explicitly. Similar approaches can be used if we build the covariance estimate from a weighted history of sparse outer products. Ultimately \NOGIN is no more expensive than other noisy gradient schemes leveraging covariance information and can be used in practice without explicitly forming the estimate for the  $\vec\Sigma$ matrix, instead representing it as a sparse vector product.

\subsection{Injected noise strength} 

We use a parameter $\lambda_h>0$ to adjust the strength of the injected noise to ensure that the noise covariance has full rank. This is closely related to the traditional choice of $\vec\mu=\gamma\vec\I$ in conventional Langevin dynamics. 

While in principal we may choose any positive value, for the choice of $\lambda^2_h = \tanh(\gamma h)$  the damping matrix \eqref{eqn::damping} has the desirable property that ${{\vec\Gamma}_h \to \exp(-h \gamma \I)}$ as ${\vec{\Sigma}(\vec{\theta})\to\vec0}$. As this is the usual form of damping matrix found in many Langevin dynamics applications, we will utilize this form for some fixed $\gamma>0$. However there are many choices for the variance of the injected noise, and we need not consider a scalar $\lambda_h$ at all. One alternative is to choose the injected noise such that $\vec\Sigma_h$ (the covariance of the random term in the noisy kick step) is kept at roughly a constant magnitude, though in practice this is computationally expensive and as it complicates the analysis we shall consider only scalar $\lambda_h$. 

\section{Error Analysis}\label{sec::analysis}
\subsection{Weak Convergence Analysis}
Given some initial conditions $(\vec{\theta},\vec{p})$, we denote the expected value of a smooth test function $f$ at time $t$ as
\[
u_f((\vec{\theta},\vec{p}),t) := \E\left[f(\vec{\theta}(t),\vec{p}(t))\,\middle|\,(\vec{\theta}(0),\vec{p}(0))=(\vec{\theta},\vec{p})\right],
\]
where the expectation is over all dynamical paths at time $t$. If the state evolves with respect to the underdamped Langevin dynamics \eqref{eqn::langevindyn}, then $u$ solves the backward Kolmogorov equation (see e.g. \citep{brooks2011handbook,leimkuhler2015molecular})
\begin{equation}
\frac{\partial u_f}{\partial t} = \left(\LL_{\text A}+\LL_{\text B}+\LL_{\text O}\right) u_f,\qquad u_f((\vec{\theta},\vec{p}),0) = f(\vec{\theta},\vec{p}). \label{eqn::kolmogorovld}
\end{equation}
If a numerical scheme integrating \eqref{eqn::langevindyn} has single-step stochastic update function $\Psi_h$ for  a timestep $h$, then define the expectation of $f$ after one step as $v_f$ where
\[
(\vec{\theta}_t,\vec{p}_t) = \Psi_h((\vec{\theta}_{t-1},\vec{p}_{t-1})),\qquad v_f((\vec{\theta},\vec{p}),h) := \E\left[f(\Psi_h((\vec{\theta},\vec{p})))\right].
\]
We assume that we may expand $v_f((\vec{\theta},\vec{p}),h)$ as a Taylor series in $h$ where 
\[
v_f((\vec{\theta},\vec{p}),h) = f(\vec{\theta},\vec{p}) + h ({\mathcal A}_1 f)(\vec{\theta},\vec{p}) + h^2 ({\mathcal A}_2 f)(\vec{\theta},\vec{p}) + \ldots 
\]
with linear operators ${\mathcal A}_i$ depending only upon $\nabla\log(\pi)$ and its derivatives and where the remainder term in the expansion can be bounded by a function independent of $h$ for  sufficiently small stepsize.

Given these assumptions, a scheme is  
weakly consistent to order $p$ if the Taylor series for $v_f$ and $u_f$ match to order $p+1$:
\begin{equation}
u_f((\vec{\theta},\vec{p}),h) - v_f((\vec{\theta},\vec{p}),h) = h^{p+1}( \LA f)(\vec{\theta},\vec{p}) + O(h^{p+1+s}) \label{eqn::consistency}
\end{equation}
for some $s>0$, and a nonzero linear differential operator $\LA$ depending smoothly on $\log(\pi)$ and its derivatives. We now give the main theoretical result of this article.
%
%
\begin{theorem} \label{thm::second}
Provided the given assumptions hold, the \NOGIN scheme is second-order weakly consistent with the dynamics \eqref{eqn::langevindyn} for sufficiently small $h$, where ${\vec{\mu}(\vec{\theta})=\gamma\I+h\vec\Sigma(\vec\theta)}/2$.
\end{theorem}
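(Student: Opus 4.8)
The plan is to recast the \NOGIN map as a perturbation of a standard palindromic splitting and then control the perturbation order by order in $h$. First I would apply the identity \eqref{eqn::bobrel} to write the single-step map as $\Psi_h=\Phi^{\text A}_{h/2}\circ\Phi^{\text B}_{h/2}\circ\Phi^{\text O}_{\vec\Gamma_h,\widetilde{\vec Y}}\circ\Phi^{\text B}_{h/2}\circ\Phi^{\text A}_{h/2}$, i.e.\ an ABOBA step built from the \emph{exact} force $\vec F$ in the kicks, the damping matrix $\vec\Gamma_h$ of \eqref{eqn::damping} in the fluctuation piece, and a noise vector $\widetilde{\vec Y}$ with $\E[\widetilde{\vec Y}]=\vec0$ and $\cov(\widetilde{\vec Y})=\I$. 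This eliminates the atomic noisy kicks and exposes \NOGIN as a genuine splitting scheme perturbed only in the O-step.

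Next I would pin down the target dynamics. Setting $\vec\mu=\gamma\I+h\vec\Sigma/2$, I would Taylor-expand $\vec\Gamma_h$ and match it against the exact Ornstein--Uhlenbeck damping $\exp(-h\vec\mu)$ required by \eqref{eqn::langevindyn}. Using $\lambda_{h/2}^2=\tanh(\gamma h/2)=\gamma h/2+O(h^3)$ gives $\vec\Sigma_{h/2}=(\gamma h/2)\I+(h^2/4)\vec\Sigma+O(h^3)$; substituting into the Neumann expansion $(\I-\vec\Sigma_{h/2})(\I+\vec\Sigma_{h/2})^{-1}=\I-2\vec\Sigma_{h/2}+2\vec\Sigma_{h/2}^2-\ldots$ and comparing with $\exp(-h\vec\mu)=\I-h\vec\mu+\tfrac12 h^2\vec\mu^2-\ldots$ yields $\vec\Gamma_h=\exp(-h\vec\mu)+O(h^3)$, and hence $\I-\vec\Gamma_h^2=\I-\exp(-2h\vec\mu)+O(h^3)$. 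Thus the first two conditional moments of the \NOGIN O-step agree with the exact solve of the $\vec\mu$-dynamics up to $O(h^3)$. Since \eqref{eqn::langevindyn} preserves \eqref{eqn::target} for \emph{any} symmetric positive definite $\vec\mu$, this $h$-dependent choice is legitimate and leaves the invariant measure untouched; its $\vec\Sigma$-part enters the O-generator at $O(h)$ and so only modifies the target propagator at $O(h^2)$, which the exact O-solve reproduces.

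With the target fixed, I would invoke the Baker--Campbell--Hausdorff argument cited for palindromic strings to conclude that the reference ABOBA scheme for the $\vec\mu$-dynamics---using $\exp(-h\vec\mu)$ and a truly Gaussian $\vec R$---is second-order weakly consistent in the sense of \eqref{eqn::consistency}, and then bound the gap between \NOGIN and this reference. The two maps differ only in the O-step damping (already $O(h^3)$ by the previous step) and in the law of the O-step noise: $\widetilde{\vec Y}$ and $\vec R$ share mean and covariance but not higher cumulants. \textbf{This last point is the crux.} The momentum increment scales as $\sqrt{\I-\vec\Gamma_h^2}\,\widetilde{\vec Y}=O(h^{1/2})$, so a generic nonzero third cumulant of the noise would contaminate the weak expansion at $O(h^{3/2})$ and ruin second order. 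The resolution I would make precise is that the injected Gaussian term $\lambda_{h/2}\vec R$ dominates the scaled gradient noise $\tfrac{h}{2}(\widetilde{\vec F}-\vec F)$, because $\lambda_{h/2}^2=\tanh(\gamma h/2)=O(h)$ while the gradient contribution to $\vec\Sigma_{h/2}$ is only $O(h^2)$. Consequently $\widetilde{\vec Y}=(\sqrt{\vec\Sigma_{h/2}})^{-1}\widetilde{\vec Z}_{h/2}$ is asymptotically Gaussian: with $(\sqrt{\vec\Sigma_{h/2}})^{-1}=O(h^{-1/2})$ and the third and fourth cumulants of $\widetilde{\vec Z}_{h/2}$ of size $O(h^3)$ and $O(h^4)$, the third cumulant of $\widetilde{\vec Y}$ is $O(h^{3/2})$ and its fourth is $O(h^2)$.

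To finish I would feed these cumulant estimates into the single-step Taylor expansion of $v_f$ about the Gaussian reference. The mean and covariance match exactly, so the first non-Gaussian contribution is carried by the $k$-th cumulant of $\widetilde{\vec Y}$ against $(\sqrt{\I-\vec\Gamma_h^2})^{\otimes k}=O(h^{k/2})$: the third-cumulant term is $O(h^{3/2})\cdot O(h^{3/2})=O(h^3)$, the fourth-cumulant (excess-kurtosis) term is $O(h^2)\cdot O(h^2)=O(h^4)$, and higher terms are smaller still. Hence $v_f^{\text{NOGIN}}-v_f^{\text{ABOBA}}=O(h^3)$, which combined with the consistency of the reference scheme gives $u_f-v_f^{\text{NOGIN}}=h^3(\LA f)+O(h^{3+s})$---precisely second-order weak consistency with \eqref{eqn::langevindyn} for $\vec\mu=\gamma\I+h\vec\Sigma/2$. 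The delicate part throughout is this final bookkeeping, where the design choice $\lambda_h^2=\tanh(\gamma h)$ is exactly what renders the noise non-Gaussianity invisible below $O(h^3)$.
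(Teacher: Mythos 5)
Your proposal is correct, and it follows the same overall architecture as the paper's proof --- both use \eqref{eqn::bobrel} to rewrite the \NOGIN step as an ABOBA composition with exact-force kicks, both compare the resulting O-step against the exact OU solve $\vec{M}_h=\exp(-h\vec\mu)$ for $\vec\mu=\gamma\I+h\vec\Sigma/2$, and both conclude via the BCH argument for palindromic strings --- but you handle the central difficulty, the non-Gaussianity of $\widetilde{\vec{Y}}$, by a genuinely different and arguably cleaner argument. The paper expands $\widetilde{\vec{Y}}=\vec{C}_1\vec{R}+\vec{C}_2\vec{S}+O(h^{5/2})$ with $\vec{S}=\widetilde{\vec{F}}-\vec{F}$, Taylor-expands $f$ in the $\vec{S}$ direction, and verifies an \emph{explicit cancellation}: the $O(h^2)$ contribution of $\cov(\vec{S})$ against $\nabla_p^2 f$ is exactly offset by the deficit arising from $\vec{C}_1\neq\I$. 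Your cumulant bookkeeping makes that cancellation automatic: because $\cov(\widetilde{\vec{Y}})=\I$ holds \emph{exactly} by construction, the discrepancy from Gaussian noise is carried entirely by cumulants of order $\geq 3$, and additivity of cumulants under independence plus the scalings $\lambda_{h/2}^2=O(h)$, $(\sqrt{\vec\Sigma_{h/2}})^{-1}=O(h^{-1/2})$ give $\kappa_3(\widetilde{\vec{Y}})=O(h^{3/2})$ and $\kappa_4(\widetilde{\vec{Y}})=O(h^2)$, hence $O(h^3)$ and $O(h^4)$ contributions to the weak error. What your route buys is transparency about \emph{why} the scheme works (the injected Gaussian noise dominates the gradient noise, rendering higher cumulants invisible below $O(h^3)$); what the paper's explicit expansion buys is the concrete mechanism by which the damping choice \eqref{eqn::damping} produces the compensation, which is what one would need to exhibit the leading-error operator $\LA$ explicitly. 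Three minor remarks. First, your bound $\vec\Gamma_h=\exp(-h\vec\mu)+O(h^3)$ suffices, but is weaker than what holds: the $h^3$ terms cancel identically for this choice of $\vec\mu$, so the agreement is $O(h^4)$, becoming $O(h^{7/2})$ only after taking the square root $\sqrt{\I-\vec\Gamma_h^2}$ (this is the exponent the paper records). Second, your argument, like the paper's, implicitly requires the third and fourth moments of the gradient noise $\vec{S}$ to be bounded uniformly in $\vec\theta$; this is worth stating as an assumption, since it is exactly what makes $\kappa_3(\vec{S}),\kappa_4(\vec{S})=O(1)$ and hence $\kappa_3(\widetilde{\vec{Z}}_{h/2})=O(h^3)$, $\kappa_4(\widetilde{\vec{Z}}_{h/2})=O(h^4)$. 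Third, your explicit justification that the $h$-dependence of $\vec\mu$ is legitimate (it leaves the invariant measure \eqref{eqn::target} untouched for any symmetric positive definite choice) is a point the paper passes over silently, and is a welcome addition.
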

%
%
\begin{proof}
In the case of the \NOGIN scheme we have a single-step update as the composition of maps
\begin{align*}
\Psi_h((\vec\theta,\vec{p})\,|\,\widetilde{\vec{Z}} ) = \Phi^{{\text A}}_{h/2} \circ\Phi^{\widetilde{\text B}}_{h/2,\widetilde{\vec{Z}}} \circ \Phi^{\text O}_{{\vec\Gamma}_h,\vec0} \circ\Phi^{\widetilde{\text B}}_{h/2,\widetilde{\vec{Z}}} \circ \Phi^{{\text A}}_{h/2}(\vec\theta,\vec{p})
\end{align*}
where $\widetilde{\vec{Z}}(\vec{\theta})=\lambda_{h/2} \vec{R} + \tfrac{h}2(\widetilde{\vec{F}}(\vec\theta)-{\vec{F}}(\vec\theta))$ represents the injected and gradient noise and where ${\vec\Gamma}_h$ is chosen as in \eqref{eqn::damping}, with $\lambda^2_h = \tanh(\gamma h)$ for some $\gamma>0$. 
For this choice of ${\vec\Gamma}_h$ one can extricate the force updates using \eqref{eqn::bobrel} to rewrite the step as 
\[
\Psi_h((\vec\theta,\vec{p})\,|\,\widetilde{\vec{Z}}) =\Phi^{{\text A}}_{h/2} \circ\Phi^{{\text B}}_{h/2} \circ \Phi^{{\text O}}_{{\vec\Gamma}_h,\widetilde{\vec{Y}}} \circ\Phi^{{\text B}}_{h/2} \circ \Phi^{{\text A}}_{h/2}(\vec\theta,\vec{p}),
\quad 
\widetilde{\vec{Y}}(\vec{\theta}) = \sqrt{\cov(\widetilde{\vec{Z}}(\vec{\theta}))^{-1}}\widetilde{\vec{Z}}(\vec{\theta})
\]
where $\widetilde{\vec{Y}}(\vec{\theta})  $ is a random vector with $\E[\widetilde{\vec{Y}}]=\vec0$ and $\cov[\widetilde{\vec{Y}}]=\I$. We now relate this O update to the correct mapping integrating exactly the O piece in \eqref{eqn::stepO}.    For the choice of ${\vec{\mu}(\vec{\theta})=\gamma\I+h\vec\Sigma(\vec\theta)}/2$ and sufficiently small $h$, we have
\[
{\vec\Gamma}_h = \vec{M}_h+O(h^{7/2}),\qquad \sqrt{\I-\tilde{\vec\Gamma}_h^2} = \sqrt{\I-\vec{M}_h^2}+O(h^{7/2}).
\]
where $\vec{M}_h=\exp(-\vec\mu(\vec\theta)h)$. 
Thus we have 
\begin{equation}\label{eqn::proof1}
f\left(  \Phi^{{\text O}}_{{\vec\Gamma}_h,\widetilde{\vec{Y}}}(\vec{\theta},\vec{p}) \right) 
=f\left(  \Phi^{{\text O}}_{
\vec{M}_h
,\widetilde{\vec{Y}}}
(\vec{\theta},\vec{p}) \right) + O(h^{7/2}).
\end{equation}
The random vector $\widetilde{\vec Y}$ is not $\N(\vec0,\vec\I)$ due to the noisy force term, however as we only need it to weakly approximate the O step to a given order it is enough that it has zero mean and unit covariance. We can see this by expanding $\widetilde{\vec{Y}}$ in powers of $h$ and writing $\vec{S}:=\widetilde{\vec{F}}(\vec\theta)-{\vec{F}}(\vec\theta)$, where we have 
\begin{align*}
\widetilde{\vec{Y}} &= \sqrt{\left(\tanh(\gamma h/2)\I +\frac{h^2}4\vec\Sigma\right)^{-1}}\left(\sqrt{\tanh(\gamma h/2)} \vec{R} + \frac{h}2\vec{S}\right)\\
&= \sqrt{\left(\I +\frac{h^2}{4\tanh(\gamma h/2)}\vec\Sigma\right)^{-1}}\left( \vec{R} + \frac{h}{2\sqrt{\tanh(\gamma h/2)}}\vec{S}\right)\\
&= \left(\I - \frac{h}{4\gamma}\vec\Sigma + \frac{3h^2}{32\gamma^2}\vec\Sigma^2\right)\vec{R}+ \left(\sqrt{\frac{h}{2\gamma}} - \sqrt{\frac{h^3}{32\gamma^3}}\vec\Sigma \right) \vec{S}+O(h^{5/2}) = \vec{C}_1 \vec{R} + \vec{C}_2 \vec{S} + O(h^{5/2})
\end{align*} 
for matrices $\vec{C}_1$ and $\vec{C}_2$. Replacing $\widetilde{\vec{Y}}$ by its expansion in \eqref{eqn::proof1} and expanding in the $\vec{S}$ terms gives
\begin{align*}
f\left(  \Phi^{{\text O}}_{{\vec\Gamma}_h,\widetilde{\vec{Y}}}(\vec{\theta},\vec{p}) \right)&=
f\left(  \vec{\theta}, \vec{M}_h \vec{p} + \sqrt{\I-\vec{M}_h^2} \vec{C}_1  {\vec{R}} + \sqrt{\I-\vec{M}_h^2} \vec{C}_2 \vec{S} \right) + O(h^{3}),\\
&= f\left(  \vec{\theta}, \vec{M}_h \vec{p} + \sqrt{\I-\vec{M}_h^2} \vec{C}_1  {\vec{R}}\right)\\
&\quad+ \sqrt{\I-\vec{M}_h^2} \vec{C}_2 \vec{S} \cdot \nabla_p f\left(  \vec{\theta}, \vec\Gamma_h \vec{p} + \sqrt{\I-\vec{M}_h^2} \vec{C}_1  {\vec{R}}\right) \\
&\quad+ \frac{h^2}{2} \vec{S}\vec{S}^\T : \nabla_p^2 f\left(  \vec{\theta}, \vec{M}_h \vec{p} + \sqrt{\I-\vec{M}_h^2} {\vec{R}}\right)
+ O(h^{3})
\end{align*}
Taking expectations and using the independence of $\vec{R}$ and $\vec{S}$ leads to the linear terms dropping out, with
\[
\E\left[f\left(  \Phi^{{\text O}}_{{\vec\Gamma}_h,\widetilde{\vec{Y}}}(\vec{\theta},\vec{p}) \right)\right]= \E\left[f\left(  \vec{\theta}, \vec{M}_h \vec{p} + \sqrt{\I-\vec{M}_h^2} \vec{C}_1  {\vec{R}}\right) \right]+ \frac{h^2}{2}\E\left[ \vec{\Sigma} : \nabla_p^2 f\left(  \Phi^{{\text O}}_{{\vec{M}_h},{\vec{R}}}(\vec{\theta},\vec{p}) \right) \right]
+ O(h^{3}).
\]
Expanding this first term 
yields
\begin{align*}
\E\left[f\left(  \vec{\theta}, \vec{M}_h \vec{p} + \sqrt{\I-\vec{M}_h^2} \vec{C}_1  {\vec{R}}\right) \right]&=
\E\left[f\left(  \vec{\theta}, \vec{M}_h \vec{p} + \sqrt{\I-\vec{M}_h^2} {\vec{R}}\right) \right]\\
&\quad- \frac{h^2}{2}\E\left[ \vec{R}\vec{\Sigma}\vec{R}^\T : \nabla_\vec{p}^2 f\left(  \Phi^{{\text O}}_{\vec{M}_h,{\vec{R}}}(\vec{\theta},\vec{p}) \right) \right] + O(h^3),
\end{align*}
and hence we obtain
\begin{align*}
\E\left[f\left(  \Phi^{{\text O}}_{{\vec\Gamma}_h,\widetilde{\vec{Y}}}(\vec{\theta},\vec{p}) \right)\right] &= 
\E\left[f\left(  \Phi^{{\text O}}_{\vec{M}_h,{\vec{R}}}(\vec{\theta},\vec{p}) \right)\right] + O(h^3) \\
&= \left(e^{h\LL_{\text O}} f\right)(\vec{\theta},\vec{p}) + h^3 ( {\mathcal A} f)(\vec{\theta},\vec{p}) +  O(h^{7/2})
\end{align*}
for operator ${\mathcal A}$ depending upon $\log(\pi)$ and its derivatives, with $\vec{M}_h=\exp(-h\vec{\mu}(\vec\theta))$ and ${\vec{R}}\sim\N(\vec0,\I)$.  Thus   the single-step expectation is
\[
v_f((\vec{\theta},\vec{p}) ,h) =
\left(e^{h\LL_{\textrm A}/2}e^{h\LL_{\textrm B}/2}e^{h\LL_{\textrm O} }e^{h\LL_{\textrm B}/2}e^{h\LL_{\textrm A}/2} f \right)(\vec\theta,\vec{p}) + O(h^3).
\]
This can be written as 
\[
v_f((\vec{\theta},\vec{p}) ,h) =\left(e^{h(\LL + h^2 {\mathcal{X}})} f \right)(\vec\theta,\vec{p}) + O(h^3)=u_f((\vec{\theta},\vec{p}),h) + O(h^3)
\]
where the operator ${\mathcal{X}}$ is explicitly given through the Baker-Campbell-Hausdorff (BCH) formula \citep{hairer2013geometric}.
\end{proof}
 
Using the results of \cite[Theorem 4.3]{abdulle2014high}, it can be shown that second order schemes provide an $O(h^m)$ infinite-time bias, where $m\geq2$. Indeed in Section \ref{sec::gaussianexact} we demonstrate that the \NOGIN scheme gives no infinite-time bias when considering Gaussian $\pi$.

\subsection{Exactness For Gaussian Distributions} \label{sec::gaussianexact}
In the case of normally distributed gradient noise we recover an exactness result for \NOGIN in terms of introduced bias into the target distribution. This comes from a stronger result giving the perturbed invariant distribution preserved by the numerical scheme.
\begin{lemma}\label{lem:exactness}
If the target distribution is of the form  ${\pi(\vec\theta,\vec p) = \N(\vec \theta\,|\,\vec\eta,\vec\Omega) \times \N(\vec p\,|\,\vec0,\I)}$ for positive definite $\vec\Omega$, then for $h^2<4\kappa$ where $\kappa$ is the largest eigenvalue of $\vec\Omega$, and gradient noise $\widetilde{\vec{F}}(\vec{\theta}) \sim \N(\vec{F}(\vec\theta),\vec\Sigma(\vec\theta))$ with positive definite $\vec{\Sigma}$, the \NOGIN scheme preserves the perturbed distribution
\[
{\pi}_h(\vec\theta,\vec p) = \N(\vec\theta\,|\,\vec\eta,\vec\Omega) \times \N\left(\vec p\,\middle|\,\vec 0,\left(\I-\frac{h^2}4\vec\Omega^{-1}\right)^{-1} \right).
\]
\end{lemma}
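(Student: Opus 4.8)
The plan is to exploit that, for a Gaussian target and Gaussian gradient noise, a single \NOGIN step is an affine map of $(\vec\theta,\vec p)$ with additive Gaussian noise, so it pushes Gaussians to Gaussians and the invariance of $\pi_h$ reduces to a single covariance identity. First I would translate so that $\vec\eta=\vec0$, making the exact force linear, $\vec{F}(\vec\theta)=-\vec\Omega^{-1}\vec\theta$. Next I would note that when $\widetilde{\vec F}(\vec\theta)\sim\N(\vec F(\vec\theta),\vec\Sigma(\vec\theta))$ the vector $\widetilde{\vec Y}$ produced by the reduction \eqref{eqn::bobrel} is \emph{exactly} $\N(\vec0,\I)$, being a linear image of the independent Gaussians $\vec R$ and $\widetilde{\vec F}-\vec F$ renormalized to unit covariance; hence the \NOGIN step equals the affine--Gaussian map $\Phi^{\text A}_{h/2}\circ\Phi^{\text B}_{h/2}\circ\Phi^{\text O}_{\vec\Gamma_h,\vec R}\circ\Phi^{\text B}_{h/2}\circ\Phi^{\text A}_{h/2}$ with $\vec R\sim\N(\vec0,\I)$, exact force in the $\text B$ steps, and $\vec\Gamma_h$ from \eqref{eqn::damping}. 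From \eqref{eqn::damping} the matrix $\vec\Gamma_h=(\I-\vec\Sigma_{h/2})(\I+\vec\Sigma_{h/2})^{-1}$ with $\vec\Sigma_{h/2}\succ\vec0$ is a symmetric contraction, so $\I-\vec\Gamma_h^2\succ\vec0$. Since the invariant mean $(\vec\eta,\vec0)=(\vec0,\vec0)$ is manifestly fixed, it remains only to show that the block-diagonal covariance $\mathrm{diag}(\vec\Omega,\vec P)$ of $\pi_h$, with $\vec P:=(\I-\tfrac{h^2}4\vec\Omega^{-1})^{-1}$, is preserved.

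The core of the argument is a three-stage decomposition of the palindromic step that isolates the role of the damping. Writing $\vec K:=\I-\tfrac{h^2}4\vec\Omega^{-1}$ so that $\vec P=\vec K^{-1}$, I would introduce the intermediate law $\rho$ under which $\vec\theta\sim\N(\vec0,\vec\Omega+\tfrac{h^2}4\vec P)$ and $\vec p\sim\N(\vec0,\I)$ are independent, and verify three transport claims. (i) The opening half-step $\Phi^{\text B}_{h/2}\circ\Phi^{\text A}_{h/2}$ carries $\pi_h$ to $\rho$: the post-kick momentum is $\vec K\vec p-\tfrac h2\vec\Omega^{-1}\vec\theta$, whose covariance $\vec K\vec P\vec K+\tfrac{h^2}4\vec\Omega^{-1}$ collapses to $\I$ using $\vec K\vec P\vec K=\vec K$ and $\vec K+\tfrac{h^2}4\vec\Omega^{-1}=\I$, while its cross-covariance with $\vec\theta+\tfrac h2\vec p$ cancels. (ii) The central $\Phi^{\text O}_{\vec\Gamma_h,\vec R}$ leaves $\rho$ invariant: conditioning on $\vec\theta$, the momentum law is $\N(\vec0,\I)$, and the fluctuation--dissipation identity $\vec\Gamma_h^2+(\I-\vec\Gamma_h^2)=\I$ returns it to $\N(\vec0,\I)$ for every value of $\vec\Gamma_h$, leaving $\vec\theta$ untouched. (iii) The closing half-step $\Phi^{\text A}_{h/2}\circ\Phi^{\text B}_{h/2}$ carries $\rho$ back to $\pi_h$: one checks $\cov(\vec\theta')=\vec K(\vec\Omega+\tfrac{h^2}4\vec P)\vec K+\tfrac{h^2}4\I=\vec\Omega$ and $\cov(\vec p')=\I+\tfrac{h^2}4\vec\Omega^{-1}(\vec\Omega+\tfrac{h^2}4\vec P)\vec\Omega^{-1}=\vec P$ with vanishing cross-covariance, again by the commuting-matrix cancellations. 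Composing (i)--(iii) gives $\pi_h\mapsto\rho\mapsto\rho\mapsto\pi_h$.

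I expect step (ii) to be the main obstacle when $\vec\Sigma(\vec\theta)$ is state-dependent, since then $\vec\Gamma_h=\vec\Gamma_h(\vec\theta)$ is a random, $\vec\theta$-measurable matrix and the O-update is no longer a fixed linear map, so the plain Gaussian-pushforward reasoning fails. The resolution --- and the conceptual content of the lemma --- is that the O-update need only be controlled conditionally on $\vec\theta$: because $\rho$ makes $\vec p$ standard normal and independent of $\vec\theta$, the conditional output momentum is $\N(\vec0,\I)$ for \emph{every} realization of $\vec\Gamma_h(\vec\theta)$, so the dependence on $\vec\Sigma$ cancels pointwise and $\pi_h$ indeed carries no $\vec\Sigma$-dependence. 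The remaining work is routine bookkeeping: confirming $\vec\Gamma_h$ is a symmetric contraction from \eqref{eqn::damping}, that the stated stepsize restriction keeps $\vec K$ (hence $\vec P$) positive definite so $\pi_h$ is a genuine distribution, and that the commuting-matrix identities above reduce to scalar identities on the common eigenbasis of $\vec\Omega$, where each is immediate.
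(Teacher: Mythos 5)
Your proof is correct and follows essentially the same route as the paper's: the intermediate law $\rho$ you introduce, with $\cov(\vec\theta) = \vec\Omega + \tfrac{h^2}{4}\vec P = \vec\Omega\left(\I - \tfrac{h^2}{4}\vec\Omega^{-1}\right)^{-1}$, is exactly the paper's distribution $\pi_h'$, and your three transport claims (i)--(iii) reproduce the paper's decomposition into $\Phi^{\text B}_{h/2}\circ\Phi^{\text A}_{h/2}$, the O step, and $\Phi^{\text A}_{h/2}\circ\Phi^{\text B}_{h/2}$. Your conditional-on-$\vec\theta$ argument for step (ii) is a slightly more explicit justification than the paper's remark that the O update amounts to an exact weak Ornstein--Uhlenbeck solve preserving the $\N(\vec 0,\I)$ momentum marginal, but the underlying reasoning is the same.
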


\begin{proof}
This can be demonstrated directly using the update maps in (\ref{eqn::stepB}-\ref{eqn::stepO}). Plugging in the mappings, we have
\begin{align*}
 \left.\Phi^{{\text B}}_{h/2}\circ\Phi^{{\text A}}_{h/2}\right.(\vec\theta,\vec p)  &\sim \pi_h'(\vec\theta,\vec p)\quad \text{if} \quad (\vec\theta,\vec p) \sim {\pi}_h  (\vec\theta,\vec p)\\ 
 \left.\Phi^{{\text A}}_{h/2}\circ\Phi^{{\text B}}_{h/2}\right.(\vec\theta,\vec p)  &\sim \pi_h(\vec\theta,\vec p)\quad \text{if} \quad (\vec\theta,\vec p) \sim {\pi}'_h  (\vec\theta,\vec p)  
\end{align*}
for the distribution
\[
\pi_h'(\vec\theta,\vec p) := \N\left(\vec\theta\,\middle|\,\vec\eta,\vec\Omega\left(\I-\frac{h^2}4\vec\Omega^{-1}\right)^{-1}\right) \times \N\left(\vec p\,\middle|\,\vec 0,\I\right).
\]
As we choose the damping matrix carefully to balance the noise correctly, and as the gradient noise is normally distributed, the O solve amounts to an exact weak solve of an Ornstein-Uhlenbeck process. Thus it preserves the correct marginal distribution in $\vec{p}$, and so
\[
\Phi^{{\text O}}_{\tilde{\vec{\Gamma}}_h,\widetilde{\vec{Y}}} (\vec\theta,\vec p) \sim {\pi}_h' (\vec\theta,\vec p) \quad \text{if} \quad (\vec\theta,\vec p) \sim {\pi}_h' (\vec\theta,\vec p).
\]
Hence the proposed distribution is preserved at each step
\[
 \Phi^{{\text A}}_{h/2}\circ\Phi^{{\text B}}_{h/2}\circ\Phi^{{\text O}}_{h,\widetilde{\vec{Y}}}\circ\Phi^{{\text B}}_{h/2}\circ\Phi^{{\text A}}_{h/2}(\vec\theta,\vec p)  \sim \pi_h(\vec\theta,\vec p) \quad \text{if} \quad (\vec\theta,\vec p) \sim {\pi}_h (\vec\theta,\vec p)
\]
as required. 
\end{proof}

Taking the marginal of $\pi_h$ over $\vec{\theta}$, we can see that the correct Gaussian target distribution $\N(\vec \theta\,|\,\vec\eta,\vec\Omega)$ is recovered without discretization bias. Given ergodicity (which can be proven rigorously using the machinery from e.g. \citep{leimkuhler2015computation}), this gives the unique long-time distribution for trajectories generated using the \NOGIN scheme.

\subsection{Diminishing returns on reducing $n$} \label{sec::zerosum}

In terms of the introduced bias, Lemma \ref{lem:exactness}  shows that for Gaussian $\pi$ we  suffer no consequences sampling using \NOGIN with normally distributed gradient noise with a large covariance (equivalently taking $n$ as small as we wish).  However there is a tradeoff in terms of sampling efficiency  as the gradient noise increases.  

This  is easy to see from the definition of \NOGIN.  The damping matrix in  step \ref{step::damping} in Algorithm \ref{alg::nogin} will appropriately damp the momentum to preserve the correct distribution, with a large gradient noise requiring a large damping that suppresses mixing. 
%
We can quantify the mixing rate using the integrated autocorrelation time (IAT) \citep{SokalIAT}, denoted $\tau$. For a trajectory $(\theta_t)$ and a suitable test function $f$ the observed error in computed averages behaves like
\[
\lim_{T\to\infty} T\,\E\left[\left(\bar{f}_T - \left< f \right>\right)^2\right] = { \sigma_f^2 \tau_f } ,
\qquad 
\bar{f}_T := \frac1T \sum_{t=1}^T f(\theta_t)
\]
where $\bar{f}_T$ is the observed average after $T$ steps and $\sigma_f^2$ is the variance of $f(\theta)$ under $\pi$ unrelated to the numerical method. The IAT $\tau_f$ can be thought of as the marginal number of timesteps required to generate an additional independent sample.  

We consider  applying the \NOGIN scheme to a one-dimensional standard normal distribution $\pi(\theta)=\N(\theta\,|\,0,1)$ with a normally distributed stochastic gradient $\widetilde F(\theta)\sim \N(F(\theta),\sigma^2)$ for variance $\sigma^2$ independent of $\theta$ and $h$. We may write an update of the \NOGIN scheme as
\[
\left[\begin{array}{c} \theta_{k+1}\\p_{k+1}\end{array}\right] = 
\vec{A} \left[\begin{array}{c} \theta_{k}\\p_{k}\end{array}\right] + 
\frac12 (1+\Gamma_h) h \sigma  R_k\left[\begin{array}{c} h/2 \\1\end{array}\right] ,
\]
for the scalar damping constant $|\Gamma_h|<1$ given in  \eqref{eqn::damping}, and $R_k\sim \N(0,1)$  with 
\[
\vec{A} = \frac18 \left[\begin{array}{cc} 8-2h^2(1+\Gamma_h) & (4h-h^3)(1+\Gamma_h)\\-4(1+\Gamma_h)h & 8\Gamma_h-2h^2(1+\Gamma_h)\end{array}\right].
\]
If $\lambda$ is the largest eigenvalue of $\vec{A}$ then its  corresponding eigenvector $\vec{v}$ is the direction in which the system mixes the slowest.  Thus the worst-case mixing rate of the system (among linear observables) can be found by computing the integrated autocorrelation time of ${f(\vec{z})=\vec{z}\cdot\vec{v}}$ as we change $\sigma$ with fixed $h$, where $\vec{z}=[\theta,p]^\T$. The autocorrelation function for $f$ at time lag $k$ is
\[
\textrm{acf}_f(k) 
:= \frac{ \E[(\vec{z}_0 \cdot \vec{v}) (\vec{z}_k \cdot \vec{v})]}{\E[(\vec{z}_0 \cdot \vec{v}) (\vec{z}_0 \cdot \vec{v})]} 
= \frac{ \E\left[(\vec{z}_0 \cdot \vec{v}) (\vec{z}_0 \cdot (\vec{A}^\T)^k\vec{v})\right]}{\E[(\vec{z}_0 \cdot \vec{v}) (\vec{z}_0 \cdot \vec{v})]} = \lambda^k,
\]
where the expectation is over all initial conditions weighted according to the known invariant distribution. The IAT $\tau_f$ is then 
\[
\tau_f := 1 + 2\sum_{k=1}^\infty \textrm{acf}_f(k) =
\frac{1+\lambda}{1-\lambda}. 
\]
From \eqref{eqn::damping} the damping term is  $\Gamma_h=(1-\sigma^2h^2/4)/(1+\sigma^2h^2/4)$. Plugging this value into $\vec A$  and considering large $\Sigma$ gives  $\lambda=1-2\sigma^{-2}+O(\sigma^{-4})$. Hence
\[
\tau_f \approx \sigma^2
\]
when the variance of the gradient noise is large. This is not an effect of the bias; Lemma \ref{lem:exactness} still holds so we expect bias to be independent of $\sigma$. Instead, the required damping inhibits the exploration of the system.

If we use an estimator such as \eqref{eqn::robmun} where the covariance scales like $C^2=O(1/n)$ then for sufficiently small $n$ we do not gain an efficiency boost by reducing $n$ further as we proportionately increase the IAT. So we would expect that halving the batchsize would double the number of steps needed to decorrelate, keeping the observed error     constant for a fixed amount of computation $Tn$.

\section{Numerical Experiments}\label{sec::experiments}
We compare the sampling bias and efficiency between our proposed \NOGIN method and several other  state-of-the-art schemes: the SGLD method \citep{welling2011bayesian};  the modified-SGLD method (mSGLD)  \citep{vollmer2016exploration}; the SGNHT scheme \citep{Leimkuhler2016}; the SGHMC scheme \citep{chen2014stochastic};  and the CC-ADL scheme \citep{shang2015covariance}. Algorithms are implemented in a python code available online\footnote{Code can be downloaded at {\tt{https://github.com/c-matthews/nogin}}. Results are generated using repository code from March 2019}.

All schemes, apart from  SGLD and SGNHT, make use of covariance information to (attempt to) remove introduced bias. In the following experiments we approximate $\vec\Sigma$ from a finite weighted history of previous force terms. The same approximated covariance information is used for all methods that utilize it.

We run multiple experiments using different batch sizes $n$ and a fixed stepsize $h$. We compare the mean-squared error (MSE) in the computed variance, as a function of the number of complete passes through the data, known as the \emph{epoch}. The MSE is compared to the exact result computed via a long simulation of the unbiased MALA scheme \citep{roberts1996}.  Decreasing the batch size enables us to take more steps per entire pass through the dataset, so we would expect the accuracy to increase as $n$ decreases up until the sampling error is swamped by the bias from the noisy gradient.  We consider the computation of force terms $\vec{F}_i$ to be the principle cost of the sampling, so we will use the number of epochs to measure total computational effort.

\subsection{Gaussian mixture model} \label{sec::gmm}
\begin{figure}[t]
\centering
\includegraphics[trim={0 2 0 0},clip,width=\textwidth]{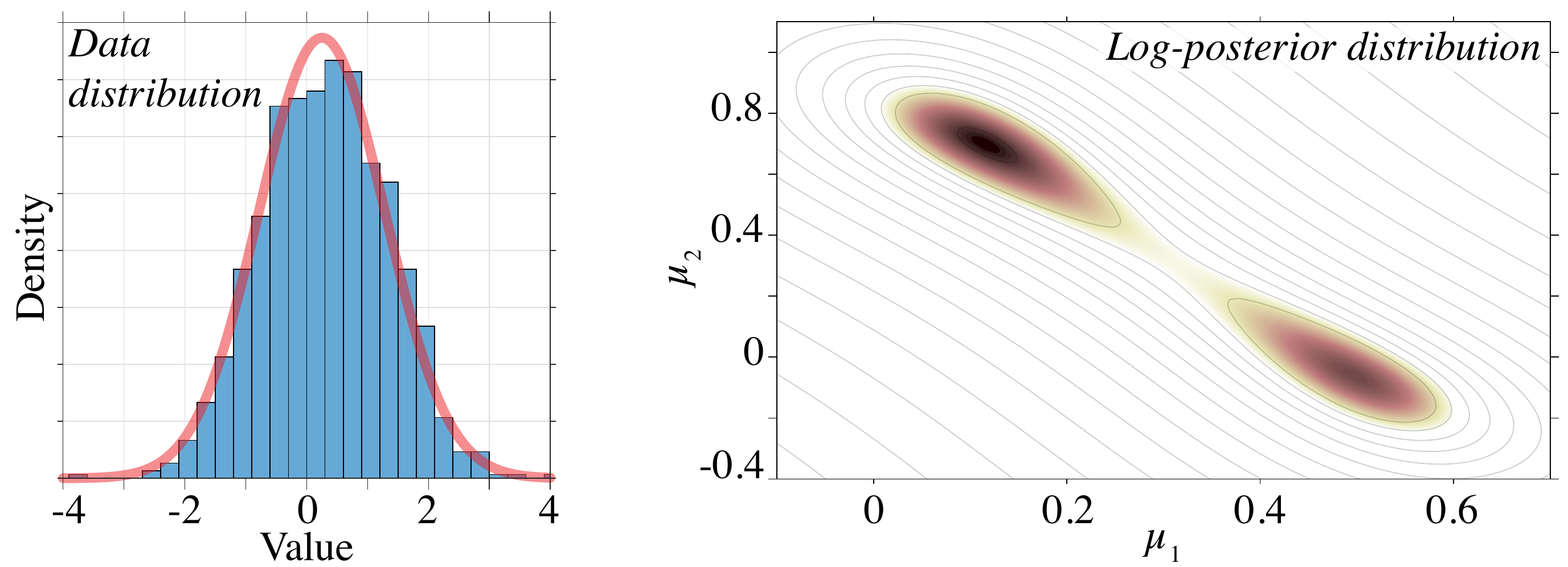}
\caption{ Left: The histogram plotting the density of the 1000 datapoints $\vec y$, with the red line showing the true distribution from which they were drawn. Right: The two-dimensional log-posterior function $\log\pi(\vec\theta)$ for $\vec\theta=[\mu_1,\mu_2]^\T$. }
\label{fig::gmix1}
\end{figure}

We consider one-dimensional data distributed according to a  Gaussian mixture distribution with mixture centres $\mu_1$ and $\mu_2$ where
\[
\pi(y\,|\,\vec\theta) \propto  \exp\left(-(y-\mu_1)^2/2\right) + 2 \exp\left(-(y-\mu_2)^2/2\right),\qquad \vec\theta=[\mu_1,\mu_2]^\T.
\]
We draw a dataset $\vec y=\{y_i\}$ of $N=1000$ datapoints, where $y_i\sim\pi(y_i\,|\,\vec\theta^*)$ with $\vec\theta^*=[0.5,0]^\T$. Choosing a uniform prior $\pi_0$, we sample points from the posterior distribution given in  \eqref{eqn::target_distribution} and plotted in Figure \ref{fig::gmix1}. It is clear that the data imply a bimodal posterior distribution, with the more unlikely basin containing $\vec\theta^*$.

For each algorithm we compare the convergence of the MSE of the variance of $\vec\theta$, using a batch size $n$ over the course of $30,000$ epochs. Using a smaller $n$ means we are able to take more steps for a fixed number of epochs, although a small batch size risks increasing the bias from the gradient noise. The results of the experiments are plotted in Figure \ref{fig::gmix2}. 

We choose the stepsize $h$ so that all algorithms perform equally when using the true gradient, as is visible in the results when $n=1000$. This ensures that we are comparing performance of the algorithms fairly as we reduce the batch size.

Of the methods tested, it is clear that \NOGIN provides the smallest error for the fewest passes through the data. It is the only scheme  that was able to provide errors below $10^{-6}$ within the allowed number of epochs. It is also the only scheme that continues to see improvement below a batch-size of $n=10$, although discretization error is present as the batch size continues to decrease.

The SGNHT method performs very poorly in this test, possibly due to the strong position-dependence of the force's covariance matrix (due to the bimodality of $\pi$). Consistency of the SGNHT method relies upon a constant covariance matrix as it does not use information about $\vec\Sigma$, so the results are not unexpected. By contrast, SGLD also does not use covariance information, but it performs significantly better than SGNHT as we lower $n$. Though the bias in SGLD is large it performs reliably even for small $n$. The mSGLD method reduces the bias for large $N$ but performs worse than SGLD when $n$ is smaller.

The SGHMC scheme requires us to manually increase the friction in order to maintain fluctuation-dissipation. The resulting diffusion through the landscape is slower, so that although each step is cheaper we end up with no improvement in efficiency.

\begin{figure}[ht]
\centering
\includegraphics[trim={0 0 0 0},clip,width=\textwidth]{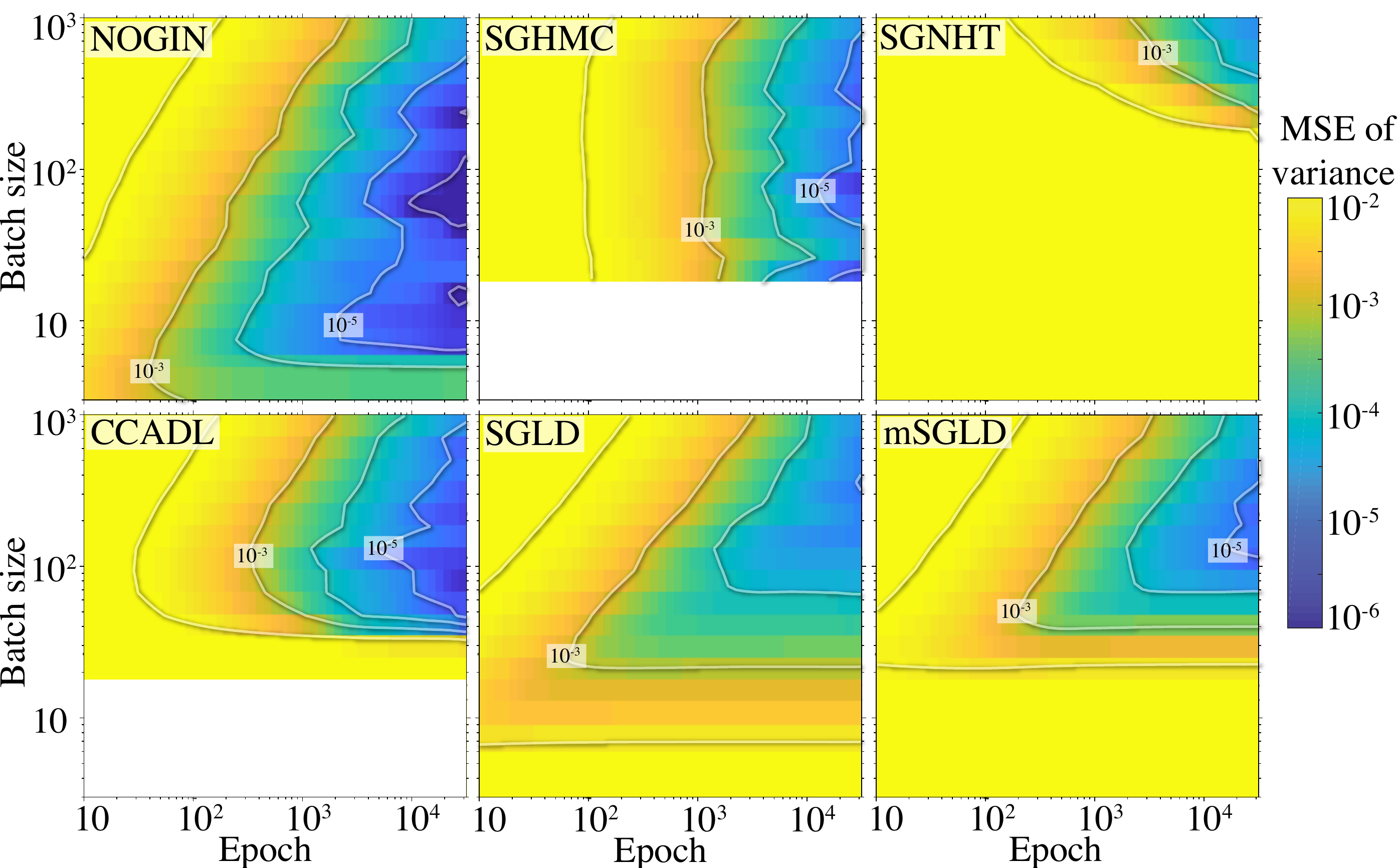}
\caption{ We plot the MSE in the computed variance for the Gaussian mixture experiment. White pixels indicate instability. }
\label{fig::gmix2}
\end{figure}

\subsection{Bayesian Logistic Regression}
We use Bayesian Logistic Regression (BLR) for classification by fitting feature parameters $\vec{\theta}\in\R^D$ to the dataset $\vec{y}=\{\vec{y}_i\}$ containing $N$ records $\vec{y_i}=\{\vec{x}_i,c_i\}$ where $\vec{x}_i\in\R^{D}$ is a feature vector and $c_i\in\{0,1\}$ is a binary class label indicating item $i$'s classification. We then use the standard BLR likelihood
\[
\pi(\vec y_i\,|\,\vec\theta) = \left(\frac1{1+e^{-\vec \theta\cdot \vec x_i}}\right)^{c_i} \left(1-\frac1{1+e^{-\vec \theta\cdot \vec x_i}}\right)^{(1-c_i)},
\]
with a prior $\pi_0(\vec\theta) = \N(\vec\theta\,|\,\vec0, 100 \I)$.  We run a BLR experiment for classifying the number 7 or 9 in the MNIST dataset. Using principal component analysis on the dataset we project the data $\vec{x}$ onto the top 128 produced eigenvectors. Including the constant term this gives $D=129$, with a total dataset size of $N=12214$. Just as in Section \ref{sec::gmm} we test all six algorithms using  the   estimator \eqref{eqn::robmun} to compute a noisy force, with the covariance estimated from a weighted finite history of previously computed force terms. We look at the mean squared error in the the variance as a function of total passes through the dataset. The results are plotted in Figure \ref{fig::blr} running at a fixed stepsize $h$ for each method. 

It is clear that the \NOGIN scheme provides superior sampling efficiency compared to the other schemes, giving around a $1\%$ error from using two hundred passes through the dataset. Additionally, the method remains stable when using e.g. $n=100$ samples per evaluation of $\widetilde{\vec{F}}$, which the other schemes cannot deliver at the chosen stepsize.

\begin{figure}[tb]
\centering
\includegraphics[trim={0 0cm 0 0},clip,width=\textwidth]{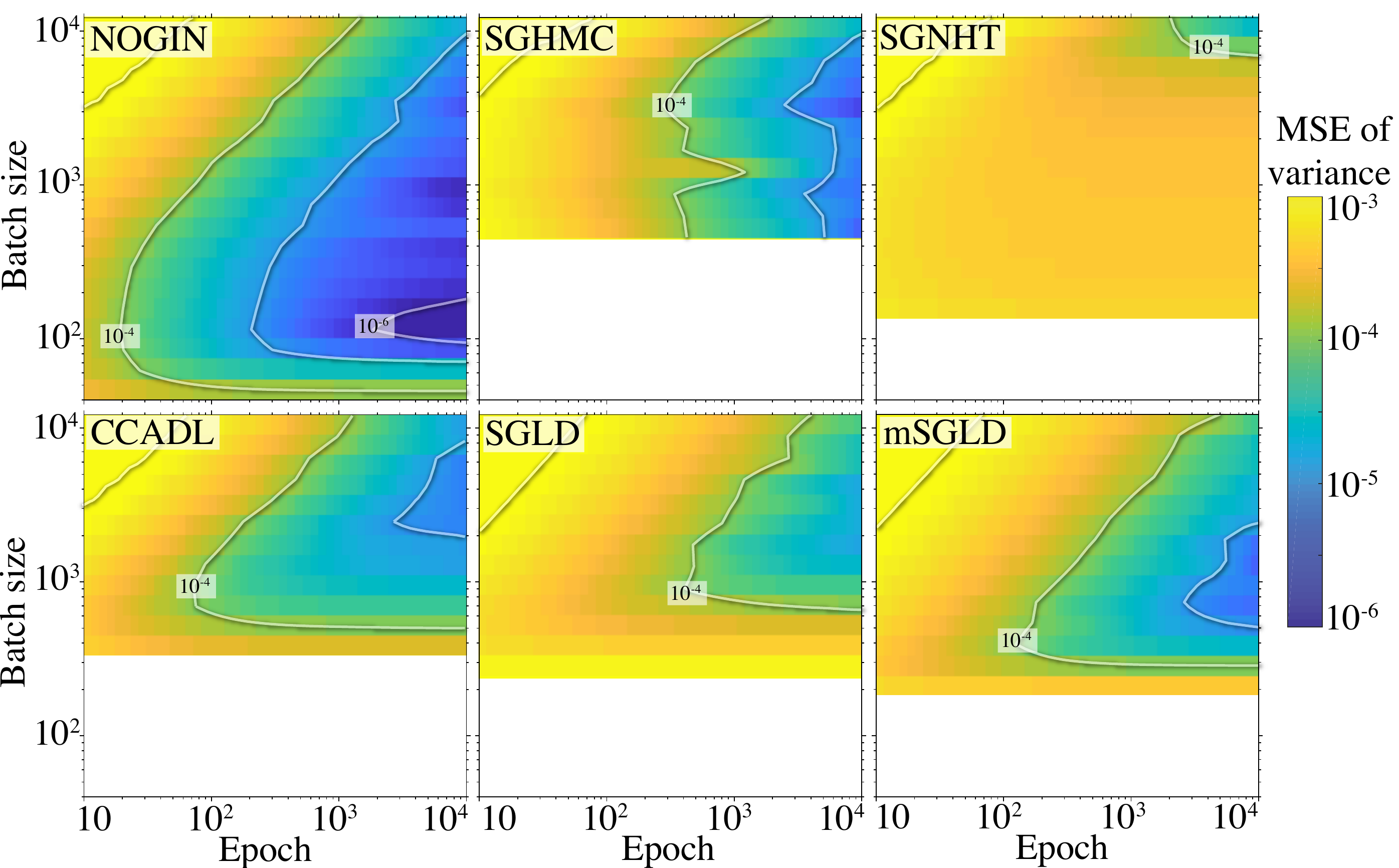}
\caption{Results for the BLR model using the MNIST 7 and 9 dataset of 12214 data points. The error in the computed parameter variance is given as a function of the number of passes through the entire dataset (epochs). 
}
\label{fig::blr}
\end{figure}
 
The  \NOGIN scheme plateaus in efficiency for $n<300$ in this problem. The vertical contour demonstrates that the efficiency gain by running with a smaller $n$ is offset by the damping term that reduces exploration rate by the same factor. This is the effect shown from the analysis in Section \ref{sec::zerosum}. This effect can be mitigated by using a more accurate estimator, which reduces the gradient noise and hene the size of the required damping. There may be other benefits for using a smaller $n$ beyond  computational efficiency, for example memory restrictions, that make the enhanced stability of \NOGIN in this regime beneficial.

\section{Discussion and Conclusion}\label{sec::conclusion}
In this paper we have presented a novel MCMC sampling algorithm for  using   noisy gradient information to sample from a prescribed target distribution. We demonstrate that the \NOGIN scheme introduces a bias to computed averages that scaled as the second-order in the stepsize $h$ in the general case, while remaining stable even when using a small mini-batch size. If the gradient noise is normally distributed, then the scheme preserves the exact distribution when sampling from a quadratic log-posterior with sufficiently small timestep. In numerical tests the scheme provides significant improvements over other stochastic gradient schemes, both in stability and in accuracy.  

Our  analysis demonstrates that efficiency plateaus when the gradient noise is large, due to a correspondingly large damping required to balance the fluctuation-dissipation relation, which slows mixing. Interestingly this slowdown is exactly countered by the reduced cost of the force and so sampling efficiency remains constant in this regime. Applying \NOGIN with a more accurate estimator  than \eqref{eqn::robmun} would reduce the variance of the gradient and alleviate this issue, which we leave to further work.

 
\bibliographystyle{abbrv}      
\bibliography{refs}   

\end{document}